\documentclass[10pt,journal,compsoc]{IEEEtran}

\ifCLASSOPTIONcompsoc

\usepackage[nocompress]{cite}
\else
\usepackage{cite}
\fi
\usepackage{longtable}
\usepackage{booktabs}
\usepackage{amsmath}
\usepackage{graphicx}
\usepackage{mwe}
\usepackage{graphbox}
\usepackage{amsfonts} 
\usepackage{enumitem}
\usepackage{amssymb}
\usepackage{ragged2e}
\usepackage[colorlinks, linkcolor=blue, anchorcolor=blue, citecolor=blue]{hyperref}
\usepackage[numbers,sort&compress]{natbib}
\usepackage{comment}
\usepackage{color}
\usepackage{bbding} 
\usepackage{diagbox} 
\usepackage{stfloats}
\usepackage{float}
\usepackage{multirow}
\usepackage{subfigure}
\usepackage[justification=centering]{caption}
\usepackage{amsthm}

\usepackage{subcaption}

\usepackage[export]{adjustbox}
\theoremstyle{definition}

\newtheorem{theorem}{Theorem}[section]

\newtheorem{definition}[theorem]{Definition} 
\newtheorem{corollary}{Corollary}[section]

\theoremstyle{remark}

\ifCLASSINFOpdf

\ifCLASSINFOpdf

\else

\fi


\hyphenation{op-tical net-works semi-conduc-tor}

\begin{document}

\title{GQHAN: A Grover-inspired Quantum Hard Attention Network}

\author{Ren-Xin Zhao,~\IEEEmembership{Member,~IEEE,} Jinjing Shi\textsuperscript{*},~\IEEEmembership{ Member,~IEEE,} and~Xuelong~Li,~\IEEEmembership{Fellow,~IEEE}
\IEEEcompsocitemizethanks{\IEEEcompsocthanksitem Ren-Xin Zhao is with the School of Computer Science and Engineering, Central South Univerisity, China, Changsha, 410083. Jinjing Shi is with the School of Electronic Information, Central South Univerisity, China, Changsha, 410083. 
\IEEEcompsocthanksitem Xuelong Li is with the Institute of Artificial Intelligence (Tele AI), China Telecom Corp Ltd, 31 Jinrong Street, Beijing 100033, P. R. China.
\IEEEcompsocthanksitem{Jinjing Shi is the corresponding author.} 
\IEEEcompsocthanksitem{E-mails: renxin\_zhao@alu.hdu.edu.cn, shijinjing@csu.edu.cn, li@nwpu.e
	du.cn.} 
}
\thanks{Manuscript received XXXX; revised XXXX.}}


\IEEEtitleabstractindextext{%
\begin{abstract}\justifying
Numerous current Quantum Machine Learning (QML) models exhibit an inadequacy in discerning the significance of quantum data, resulting in diminished efficacy when handling extensive quantum datasets.
Hard Attention Mechanism (HAM), anticipated to efficiently tackle the above QML bottlenecks, encounters the substantial challenge of non-differentiability, consequently constraining its extensive applicability.
In response to the dilemma of HAM and QML, a Grover-inspired Quantum Hard Attention Mechanism (GQHAM) consisting of a Flexible Oracle (FO) and an Adaptive Diffusion Operator (ADO) is proposed.
Notably, the FO is designed to surmount the non-differentiable issue by executing the activation or masking of Discrete Primitives (DPs) with Flexible Control (FC) to weave various discrete destinies.
Based on this, such discrete choice can be visualized with a specially defined Quantum Hard Attention Score (QHAS).
Furthermore, a trainable ADO is devised to boost the generality and flexibility of GQHAM.
At last, a Grover-inspired Quantum Hard Attention Network (GQHAN) based on QGHAM is constructed on PennyLane platform for Fashion MNIST binary classification.
Experimental findings demonstrate that GQHAN adeptly surmounts the non-differentiability hurdle, surpassing the efficacy of extant quantum soft self-attention mechanisms in accuracies and learning ability. In noise experiments, GQHAN is robuster to bit-flip noise in accuracy and amplitude damping noise in learning performance.
Predictably, the proposal of GQHAN enriches the Quantum Attention Mechanism (QAM), lays the foundation for future quantum computers to process large-scale data, and promotes the development of quantum computer vision.
\end{abstract}
\begin{IEEEkeywords}\justifying
Machine learning, Quantum machine learning, Grover's algorithm, Hard attention mechanism, Grover-inspired quantum hard attention Network, Quantum neural network, Quantum circuit.
\end{IEEEkeywords}}

\maketitle

\IEEEdisplaynontitleabstractindextext

\IEEEpeerreviewmaketitle

\section{Introduction}\label{introduction}

\IEEEPARstart{I}{n} recent years, QML has developed tremendously \cite{0.0,0.1,0.2}. However, many current QML models treat each quantum data equally and neglect the value of the intrinsic connections between data. This not only mandates substantial quantum storage resources for comprehensive information retention, but also constitutes a threat to the prospective large-scale quantum data processing on forthcoming quantum computers. The above urgent issues can be effectively addressed by HAM. 

HAM was first proposed in 2015 \cite{0.3}, as a discrete competitive model that adheres to the winner-take-all law which only concentrates on the most vital parts and overlooks the rest. This unique computing mechanism decreases the cost of data acquisition \cite{0.4} for designing interpretable \cite{0.5}, computationally effective \cite{0.6} and scalable \cite{0.7} models, mitigating catastrophic forgetting to a certain extent \cite{0.8}, consequently resulting in widespread applications in computer vision \cite{0.9}, natural language processing \cite{0.91} and video processing \cite{0.10,0.11}, etc. One of the striking cases is that  HAM achieves two impressive accuracies of 95.83\% and 99.07\% for safe driving recognition and driver distraction detection, respectively, in addition to a 38.71\% reduction in runtime \cite{0.12}. Although very effective, HAM is stuck in a non-differentiable dilemma that hinders its optimization due to the discrete nature of information selection processing \cite{0.13}. To conquer this dilemma, various strategies such as reinforcement learning \cite{0.91,0.10} and Gumbel-softmax based straight-through estimator \cite{0.101,0.102} have been proposed. Nevertheless, formulating a proficient reward function aligned with the task in reinforcement learning proves challenging, and misguided reward specifications may engender suboptimal attention patterns. Straight-through estimator schemes may also require a trade-off between computational efficiency and accuracy. Hence, it is inherently more practical and convenient to make HAM compatible with gradient systems. To achieve this goal, a quantum scheme inspired by Grover's algorithm is attempted to compensate for the above shortcoming.

Grover's algorithm is a quantum algorithm composed of an oracle and a diffusion operator for unstructured search \cite{0.14, 0.141}, where the oracle selects specific target items through phase flipping. The diffusion operator amplifies the amplitudes of the chosen ones and suppresses the amplitudes of the other items. Obviously, the working mechanism of Grover's algorithm is somehow similar to HAM, as it also involves discretely selecting specific targets.  However, the application of Grover's algorithm alone is insufficient to surmount the non-differentiable nature. Simultaneously, when the number of target terms is unknown or dynamically changing, the performance and adaptivity of Grover's algorithm are constrained by the inability to ascertain the appropriate number of iterations \cite{0.15,0.16,0.17,0.18}.  Thus a dramatic modification of this algorithm is necessary to create a differentiable quantum hard attention mechanism, which induces the three main propositions of this paper: 

1. Can current QML models be equipped with a quantum HAM to distinguish the intrinsic importance of quantum data?

2. How can a differentiable GQHAN be constructed by combining the Gover's algorithm with HAM?

3. How does GQHAN remain efficient when the target term is unknown or dynamically altering?

To this end, the \textbf{contribution} of this paper lies in addressing these pertinent inquiries and presenting novel insights into the potential of GQHAN: 

\begin{itemize}
	\item GQHAM is posited to discern the significance of quantum data, thereby augmenting the efficacy of QML model processing.	
	\item FO and ADO are proposed as solutions for addressing the discrete non-differentiable predicament inherent in GQHAM.	
	\item Based on GQHAM, a GQHAN is constructed on PennyLane for Fashion MNIST binary classification experiments, achieving an impressive accuracy of no less than 98\% and lower convergence values, which indicates that it has surpassed the two quantum soft attention mechanisms in terms of classification accuracy and learning ability.
\end{itemize}

\begin{table*}[ht]
	%
	\def\tablename{Tab.}
	\centering
	\caption{Quantum Gates}
	\label{Notations}
	\small
	\begin{tabular}{@{}lccc@{}}
		\toprule
		Name of Quantum Gate & Mathematical Notation & Matrix Representation & Symbol of Quantum Gate \\ \midrule
		
			Pauli X gate& $X$& $\left[ \begin{matrix}
			0   &1 \\ 
			1   &0  \\
		\end{matrix} \right]$
		
		&         
		\includegraphics[align=c,scale=0.126]{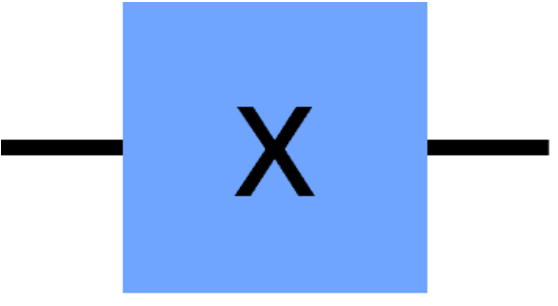} \\ \rule{-2pt}{22pt}
		
		Pauli rotating X gate& $R_{X}(\theta)$& $\left[ \begin{matrix}
 \cos(\theta/2)   & -i\sin(\theta/2) \\ 
-i\sin(\theta/2)   & \cos(\theta/2)  \\
		\end{matrix} \right]$
		
		&         
		\includegraphics[align=c,scale=0.126]{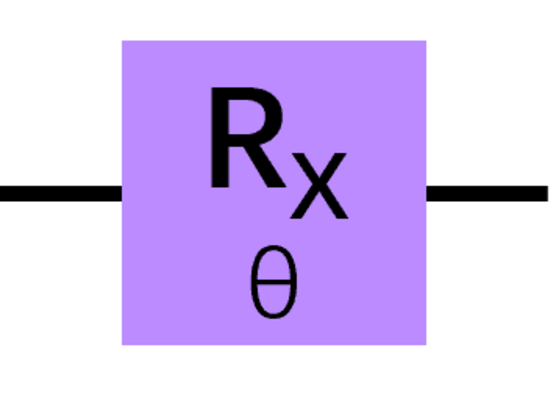} \\ \rule{-2pt}{22pt}
	
		Hadamard gate &    $H$ &$\frac{1}{\sqrt{2}}\left[ \begin{matrix}
		1 & 1  \\
		1 & -1  \\
	\end{matrix} \right]$
	&         \includegraphics[align=c,scale=0.1]{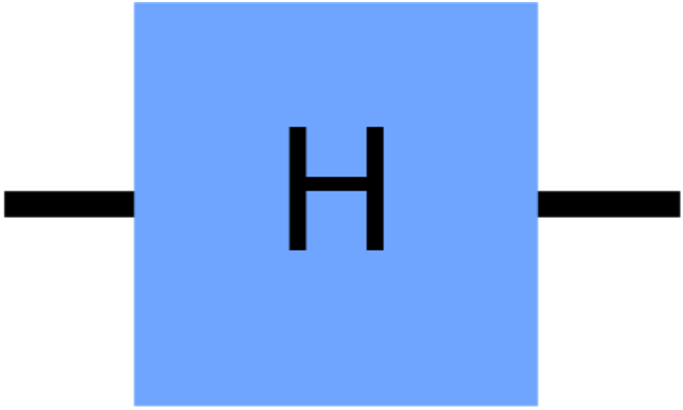}                                             	 \\  \rule{-2pt}{36pt}
	
	Controlled Y gate&    $CR_Y(\theta)$ &$\left[ \begin{matrix}
		1 & 0 & 0 & 0  \\ 
		0 & 1 & 0 & 0  \\ 
		0 & 0 & \cos(\theta/2)   & -\sin(\theta/2)    \\ 
		0 & 0 & \sin(\theta/2)  & \cos(\theta/2)    \\
	\end{matrix} \right]$
	&         \includegraphics[align=c,scale=0.13]{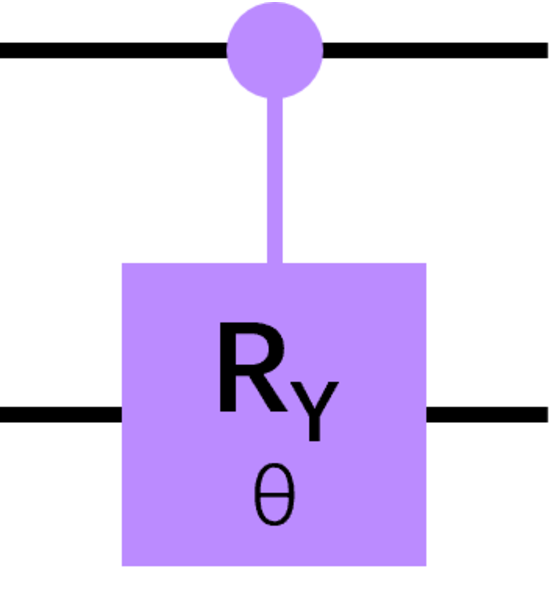}                                             	 \\ \rule{-2pt}{42pt}
		
		Multi-controlled Z gate &    $MCZ$ &$\left[ \begin{matrix}
   1 & 0 & 0 & \cdots  & 0  \\
0 & 1 & 0 & \cdots  & 0  \\
\vdots  & \vdots  & \vdots  & \ddots  & \vdots   \\
0 & 0 & 0 & \cdots  & -1  \\
		\end{matrix} \right]$
		&         \includegraphics[align=c,scale=0.13]{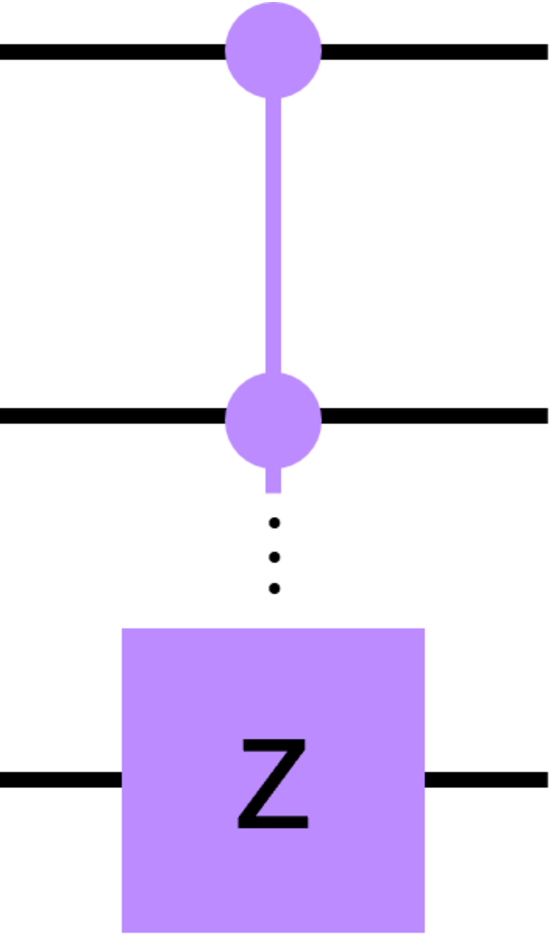}                                             	 \\ \rule{-2pt}{42pt}
		
				Discrete primitive& $C-\Lambda (b)$& Eq. (\ref{discrete primitive1})
		
		&         
		\includegraphics[align=c,scale=0.126]{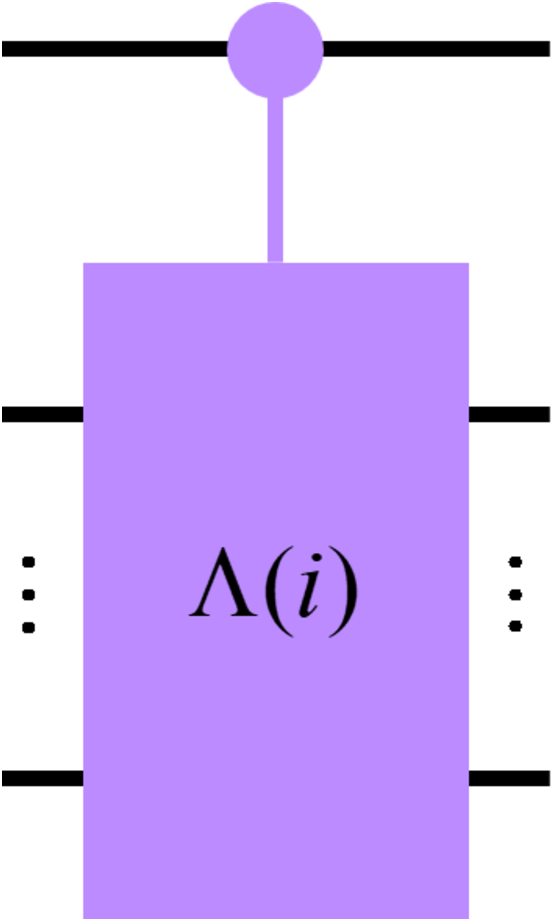} \\  
		\bottomrule
	\end{tabular}%
\end{table*}
 \begin{figure*}[h]
	\centering
	\begin{minipage}[t]{1\textwidth}
		\centering\includegraphics[scale=0.15]{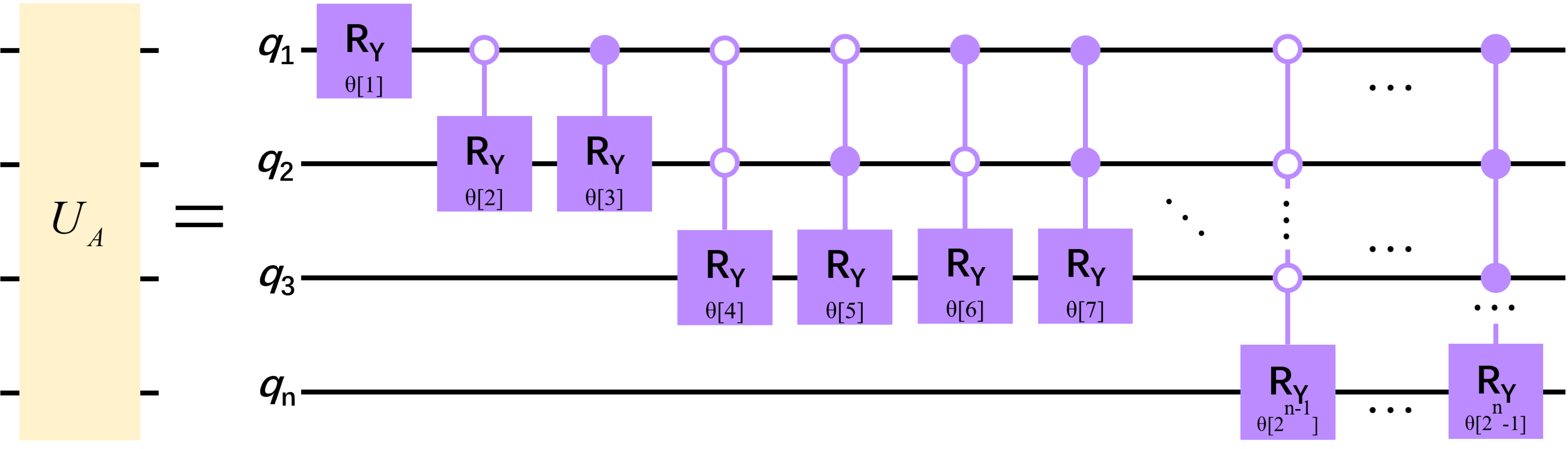}\caption{Quantum Amplitude Encoding \cite{2.0}}\label{AMPLITUDU}
	\end{minipage}
	\quad
	\begin{minipage}[t]{1\textwidth}
		\centering\includegraphics[scale=0.15]{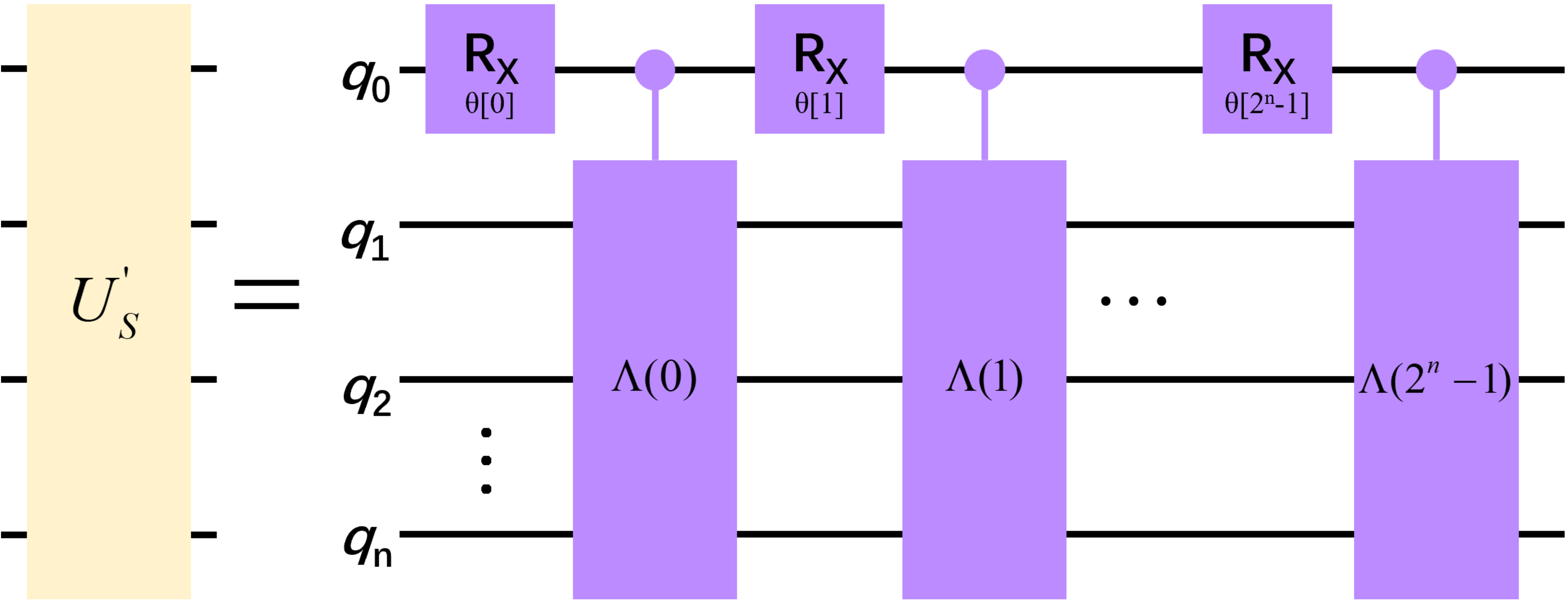}\caption{Flexible Oracle }\label{US_1}
	\end{minipage}
	\quad
	\begin{minipage}[t]{1\textwidth}
		\centering\includegraphics[scale=0.15]{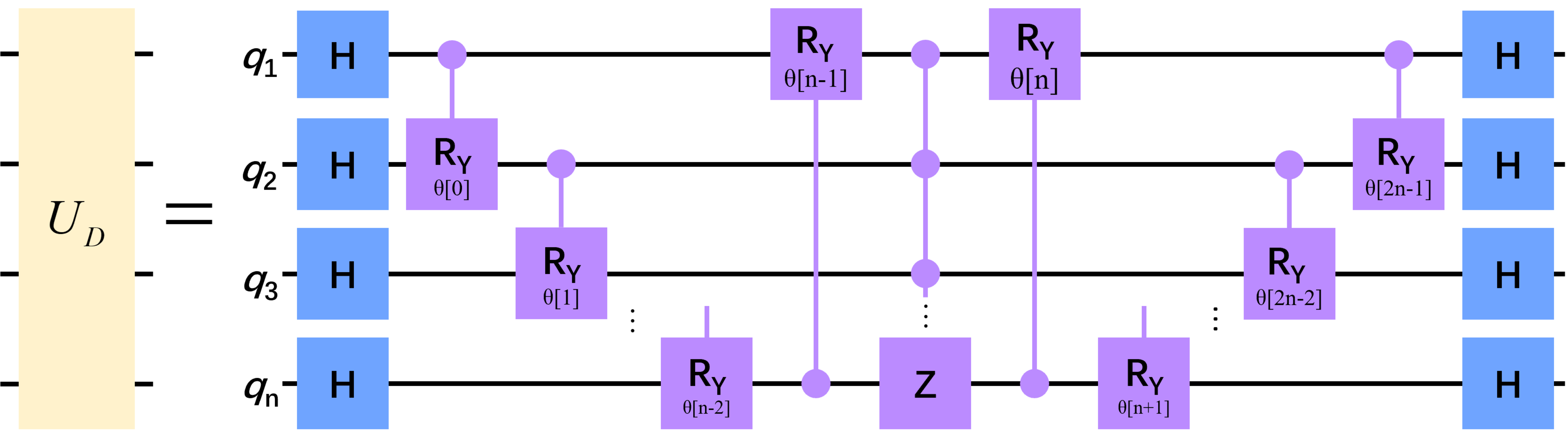}\caption{ Adaptive Diffusion Operator}\label{UD1}
	\end{minipage}
\end{figure*}

The subsequent sections of this paper are organized as follows: Section 2 contains a brief overview of QAM, quantum computing and Grover's algorithm. In Section 3, the mathematical mechanism of GQHAM is elaborated.The GQHAN framework and its workflow are described in Section 4. Section 5 delineates the experimental details and several meaningful conclusions are drawn. Eventually, a summary is presented.

\section{Related Works}

In this section, a succinct overview of quantum computing foundation, QAM, and Grover's algorithm is presented.

\subsection{Quantum Computing Foundation}\label{Quantum gates}

Qubits and quantum gates are the figurative embodiment of quantum theory. In this context, a qubit $|\psi \rangle =\alpha |0\rangle +\beta |1\rangle $ is the smallest unit that carries information, where $|0\rangle ={{[1,0]}^{\text{T}}}$ and $|1\rangle ={{[0,1]}^{\text{T}}}$ denote the ground state and the excited state respectively. $\alpha$ and $\beta$ are amplitudes satisfying $|\alpha|^2+|\beta|^2=1$ \cite{1.4}. In stark contrast to classical computation, $|\psi \rangle$ is in a superposition of $|0\rangle$ and $|1\rangle$, thereby conferring upon it the remarkable ability for exponential data representation. Moreover, the linear evolution of qubits is contingent upon quantum gates whose mathematical essence is the unitary matrices. The quantum gates used in this paper are shown in Tab. \ref{Notations}, including Pauli X gate, Pauli rotating X gate, Hadamard gate, controlled Y gate, multi-controlled Z gate and DP, where DP is one of the innovations in this paper.
\subsection{Quantum Attention Mechanisms}\label{QAM}

In QML, QAM attempts to boost the performance of QML models by discriminating the significance of quantum data. Early researches on QAM are inspired by quantum physics principles. For example, a parameter-free QAM utilizing weak measurements was introduced in 2017 to enhance bidirectional LSTM sentence modeling, demonstrating superior performance compared to classical attention mechanism \cite{1.0}. Nonetheless, these investigations exclusively depended on classical computer simulations owing to the absence of an ansatz framework. QAM with ansatz structure has not appeared until 2022. In 2022, a probabilistic full-quantum self-attention network with an exponentially scalable self-attention representation realized completely on a quantum computer was proposed \cite{1.1}. Its learning ability outperforms hardware-efficient ansatz and QAOA ansatz in a comparable configuration. In the same year, a quantum self-attention network was designed by a Baidu team to calculate quantum self-attention scores using hardware-efficient ansatz with remarkable success in text categorization \cite{1.2}. In 2023, a quantum self-attention network based on the quantum kernel method and deferred measurement principle was introduced, which achieved impressively high accuracy with few parameters \cite{1.3}. While QAM has exhibited considerable promise in domains like quantum machine vision, its theoretical underpinnings and practical applications necessitate further augmentation owing to the constraints posed by quantum hardware.

\subsection{Grover's Algorithm}
Grover's algorithm has undergone iterative enhancements in multiple aspects such as phase and initial state optimization, since its initial proposal, where phase optimization involves strategically adjusting the phase flip in the oracle.  Illustratively, in a fixed phase rotation Grover's algorithm, the rotation phase shifts from $\pi$ to $1.91684\pi$, thereby elevating the success probability to 99.59\% \cite{1.5}.  Initial state optimization allows for arbitrary initial states, markedly amplifying the generality of Grover's algorithm. Some reasons for this are that the mean and variance of the amplitude distribution of the initial state affects the measurement of the optimal time and the maximum probability of success, but the measurement of the optimal time is independent of the initial state which is an arbitrary pure state \cite{1.6,1.7}. Although numerous efforts have been conducted to boost the efficiency of Grover's algorithm, it has not been explored to incorporate trainable parameters to further strengthen its flexibility and applicability. In other words, in this paper, an attempt is presented to convert the untrainable Grover's algorithm into a trainable QML model to distinguish it from previous improvement strategies.

\section{Grover-inspired Quantum Hard Attention Mechanism}\label{sec3}

In this section, GQHAM, which successfully mitigates the non-differentiability issue, is presented and defined as follows:
\begin{definition}[Grover-inspired Quantum	Hard Attention  Mechanism] 
	\begin{equation}\label{GQHAM}
\text{GQHAM}:={{U}_{D}}U_{S}^{'}|\mathbf{In}{{\rangle_{2} }},
	\end{equation} 
	is composed of an input quantum state $|\mathbf{In}\rangle_{2}$, an ADO ${U}_{D}$, and an FO $U_{S}^{'}$  that contains three novel concepts of DP, FC and QHAS.
\end{definition}
The tenets of the FO and the ADO are elucidated emphatically in the ensuing subsections.

\subsection{ Flexible Oracle}

The classical input is set as 
\begin{equation}\label{input0}
\mathbf{In}=[{{a}_{b}}]_{b=0}^{l-1}\in {{\mathbb{R}}^{1\times l}}\subseteq \mathcal{X},
\end{equation} 
where ${a}_{b}$ indicates an element. $l$ is the total number of elements. $\mathcal{X}$ denotes the space where the input data is located. Subsequently,  Eq. (\ref{input0}) is converted to an input quantum state
\begin{equation}\label{input}
|\mathbf{In}{{\rangle_{2} }}=\sum\limits_{b=0}^{l-1}{{{\alpha }_{b}}|b{{\rangle _{2}}}}+\sum\limits_{b=l}^{{{2}^{n}}-1}{0|b{{\rangle_{2} }}}\subseteq \mathcal{H}
\end{equation} 
by quantum amplitude encoding $U_A$ in Fig. \ref{AMPLITUDU} \cite{2.0}, where the subscript 2 indicates that Eq. (\ref{input}) is situated in the second quantum register. $|b\rangle_2$ refers to the basis vector. $\mathcal{H}$ is the Hilbert space. $n=\left\lceil {{\log }_{2}}l \right\rceil$ is the amount of qubits.  $\sum\nolimits_{b=0}^{l-1}{\alpha _{b}^{2}}=1$. ${{\alpha }_{b}}={{a}_{b}}/\sqrt{\sum\nolimits_{d=0}^{l-1}{a_{d}^{2}}}$. $a_{b}, a_{d} \in {\mathbf{In}}$. Then $m$ target basis vectors 
\begin{equation}\label{M}
\mathcal{M}={{\{|{{c}_{b}}\rangle }_{2}}\}_{b=0}^{m-1}\subseteq {{\{|b\rangle_{2} }}\}_{b=0}^{{{2}^{n}}-1}
\end{equation}  
from Eq. (\ref{input}) are randomly selected and linearly combined with the opposite values of their corresponding probability amplitudes into the target term
\begin{equation}\label{target}
|\text{focus}{{\rangle_{2} }}=\sum\limits_{b=0}^{m-1}{{{\beta }_{b}}|{{c}_{b}}{{\rangle_{2} }}}.
\end{equation} 
The formation of Eq. (\ref{target}) is carried out by an oracle 
\begin{equation}\label{USmatric}
	{{U}_{S}}=\left[ \begin{matrix}
		{{(-1)}^{f(|0\rangle )}} & {} & {} & {}  \\
		{} & {{(-1)}^{f(|1\rangle )}} & {} & {}  \\
		{} & {} & \ddots  & {}  \\
		{} & {} & {} & {{(-1)}^{f(|{{2}^{n}}-1\rangle )}} \\
	\end{matrix} \right]
\end{equation} 
which is mathematically a $2^n \times 2^n$ diagonal unitary matrix, where
\begin{equation}\label{cond}
	f(x)=\left\{ \begin{array}{*{35}{l}}
		1 & x\in \mathcal{M}  \\
		0 & else  \\
	\end{array} \right..
\end{equation}  
To be specific,
\begin{equation}\label{US}
	{{U}_{S}}|b{{\rangle_{2} }}=\left\{ \begin{array}{*{35}{l}}
		-|b{{\rangle_{2} }} & |b{{\rangle_{2} }}\in \mathcal{M}  \\
		|b{{\rangle_{2} }} & else  \\
	\end{array} \right.,
\end{equation} 
which articulates that Eq. (\ref{M}) are capable of self-labeling by phase inversion when acted on by Eq. (\ref{USmatric}). 
From a holistic point of view, if Eq. (\ref{USmatric}) is updated once, the conditional judgments in Eq. (\ref{cond}) must be performed several times. This not only falls into the trap of discrete non-differentiability but also makes the problem inaccessible because it is difficult to directly reproduce such multiple complex operations as Eq. (\ref{USmatric}) and Eq. (\ref{cond}) with quantum gates. However, from another perspective, Eq. (\ref{USmatric}) can be disassembled into 
\begin{equation}\label{USd}
{{U}_{S}}=\prod\limits_{b=0}^{{{2}^{n}}-1}{{{\lambda }_{b}}},
\end{equation} 
where any element
 \begin{equation}\label{cond3}
\!{{\lambda }_{b}}\!=\!\left[ \begin{matrix}
\!	1 & {} & {} & {} & {}  \\
\!	{} & \ddots  & {} & {} & {}  \\
\!	{} & {} & (-1)_{b}^{f(|b\rangle )} & {} & {}  \\
\!	{} & {} & {} & \ddots  & {}  \\
\!	{} & {} & {} & {} & 1  \\
\end{matrix} \right]\!=\!\left\{ \begin{array}{*{35}{l}}
\!	\Lambda (b) & \!b\in \mathcal{M}  \\
\!	I & \!else  \\
\end{array} \right.\!.
 \end{equation} 
  \begin{equation}\label{A}
\Lambda (b)=\left[ \begin{matrix}
	1 & {} & {} & {} & {}  \\
	{} & \ddots  & {} & {} & {}  \\
	{} & {} & -1_b & {} & {}  \\
	{} & {} & {} & \ddots  & {}  \\
	{} & {} & {} & {} & 1  \\
\end{matrix} \right].
 \end{equation} 
Eq. (\ref{cond3}) (or Eq. (\ref{A})) represents a diagonal unitary matrix whose $b$-th element on the diagonal is $(-1)_{b}^{f(|b\rangle )} $ (or $-1$) and the others are $1$.
All symbols $I$ collectively stand for the identity matrix throughout this article. 

Deconstructing a intricate problem through perspective shifts proves more lucid and facile than contemplating it holistically in advance, since it is now only necessary to consider how to switch the two forms of Eq. (\ref{cond3}), Eq. (\ref{A}) and $I$, with continuous parameters. Besides, it is worth noting that when ${\lambda }_{b}=I$, it is equivalent to applying no operation to the current quantum circuit. Guided by the above idea, a set of DPs
\begin{equation}\label{discrete primitives}
\left\{ C-\Lambda (b) \right\}_{b=0}^{{{2}^{n}}-1}
\end{equation} 
 is introduced.
 \begin{definition}[Discrete Primitive]\label{D1}
In Eq. (\ref{discrete primitives}), an arbitrary DP
\begin{equation}\label{discrete primitive}
C-\Lambda (b)=I\otimes |0\rangle_1 \langle 0|_1+\Lambda (b)\otimes |1\rangle_1 \langle 1|_1
\end{equation} 
as shown in Tab. \ref{Notations} is a controlled quantum gate with one control bit and $n$ target bits, which means that Eq. (\ref{A}) is applied on the first quantum register only when the controlled qubit is $|1\rangle_1$. In Eq. (\ref{discrete primitive}), $\otimes$ represents the tensor symbol. $\langle0|_1$ and $\langle1|_1$ correspond to the conjugate transpositions of $|0\rangle_1$ and $|1\rangle_1$, respectively. 
 \end{definition}
  \begin{corollary}
Mathematically, Eq. (\ref{discrete primitive}) is a unitary diagonal matrix.
 \end{corollary}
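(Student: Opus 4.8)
The plan is to verify the two claimed properties—diagonality and unitarity—separately, exploiting the special algebraic structure of $\Lambda(b)$ and of the control-qubit projectors. First I would record the elementary facts about the building blocks of Eq. (\ref{discrete primitive}). The matrix $\Lambda(b)$ in Eq. (\ref{A}) is a real diagonal matrix whose entries are all $1$ except for a single $-1$ in the $b$-th slot; consequently it is diagonal, satisfies $\Lambda(b)^\dagger=\Lambda(b)$ (it is Hermitian, being real and diagonal), and is an involution, $\Lambda(b)^2=I$. The control-qubit operators $|0\rangle_1\langle 0|_1$ and $|1\rangle_1\langle 1|_1$ are the two orthogonal rank-one projectors on the first register, so they are diagonal, Hermitian, idempotent, mutually annihilating, and resolve the identity: $|0\rangle_1\langle 0|_1+|1\rangle_1\langle 1|_1=I$.

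For diagonality, I would simply note that the Kronecker product of two diagonal matrices is again diagonal. Since $I$, $\Lambda(b)$, $|0\rangle_1\langle 0|_1$ and $|1\rangle_1\langle 1|_1$ are all diagonal, both summands $I\otimes|0\rangle_1\langle 0|_1$ and $\Lambda(b)\otimes|1\rangle_1\langle 1|_1$ in Eq. (\ref{discrete primitive}) are diagonal, and hence so is their sum $C-\Lambda(b)$.

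For unitarity, the cleanest route is to show that $C-\Lambda(b)$ is a Hermitian involution, which immediately yields $(C-\Lambda(b))^\dagger (C-\Lambda(b))=(C-\Lambda(b))^2=I$. Hermiticity follows by taking adjoints term by term and using that $\Lambda(b)$ and both projectors are self-adjoint. To square the operator I would expand using the mixed-product rule $(A\otimes B)(C\otimes D)=AC\otimes BD$; the cross terms carry the factors $|0\rangle_1\langle 0|_1\,|1\rangle_1\langle 1|_1$ and $|1\rangle_1\langle 1|_1\,|0\rangle_1\langle 0|_1$, both of which vanish by orthogonality, while the surviving terms collapse to $I\otimes(|0\rangle_1\langle 0|_1+|1\rangle_1\langle 1|_1)=I$ after invoking $\Lambda(b)^2=I$ and the idempotency of the projectors. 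Equivalently, one may just observe that, having established diagonality, every diagonal entry of $C-\Lambda(b)$ equals $\pm 1$ and therefore has modulus one, which for a diagonal matrix is exactly the condition for unitarity.

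Since the argument is a direct algebraic verification, there is no genuinely hard step; the only point requiring care is the bookkeeping of the tensor-product ordering (control register versus target register) and the correct invocation of the mixed-product property together with projector orthogonality, rather than manipulating the large $2^{n+1}\times 2^{n+1}$ matrix entrywise.
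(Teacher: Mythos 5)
Your proposal is correct and follows essentially the same route as the paper: expand Eq.~(\ref{discrete primitive}) using the tensor-product structure, observe that the result is diagonal (the paper writes it explicitly as the block matrix $\mathrm{diag}(I,\Lambda(b))$), and verify $UU^{\dagger}=U^{\dagger}U=I$. If anything, your justification of the unitarity step (Hermitian involution via the mixed-product rule, or equivalently modulus-one diagonal entries) is more detailed than the paper's, which simply asserts the identity after displaying the block form.
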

   \begin{proof}
   	\begin{equation}\label{discrete primitive1}
   		\begin{aligned}
   			C-\Lambda (b)&=I\otimes |0\rangle_1 \langle 0|_1+\Lambda (b)\otimes |1\rangle_1 \langle 1|_1 \\ 
   			& =I\otimes \left[ \begin{matrix}
   				1 & 0  \\
   				0 & 0  \\
   			\end{matrix} \right]+\Lambda (b)\otimes \left[ \begin{matrix}
   				0 & 0  \\
   				0 & 1  \\
   			\end{matrix} \right] \\ 
   			& ={{\left[ \begin{matrix}
   						I & 0  \\
   						0 & \Lambda (b)  \\
   					\end{matrix} \right]}_{{{2}^{n+1}}\times {{2}^{n+1}}}}  
   		\end{aligned}
   	\end{equation} 
satisfies $C-\Lambda (b)C-\Lambda {{(b)}^{\dagger }}=C-\Lambda {{(b)}^{\dagger }}C-\Lambda (b)=I$ and is therefore a unitary diagonal matrix, where $C-\Lambda {(b)}^{\dagger }$ is the conjugate transpose of $C-\Lambda (b)$.
 \end{proof}
In adherence to Definition \ref{D1}, all elements in Eq. (\ref{discrete primitives}) function when the controlled qubit is $|1\rangle_1$. Leveraging this uniqueness, quantum gates containing parameters are used to determine whether the controlled qubit is $|1\rangle_1$ or not, thus selectively activating and shielding certain elements in Eq. (\ref{discrete primitives}). For this purpose, the concept of FC is proposed.
\begin{definition}[Flexible Control]\label{Flexible Control}
Suppose that the quantum control bit of Eq. (\ref{discrete primitive}) is initialized to $|0\rangle_1$. Subsequently Pauli rotating X gates ${{R}_{X}(\theta)}$ are posed on $|0\rangle_1$ to form the flexible control
\begin{equation}\label{cond1}
{{R}_{X}}(\theta )|0{{\rangle_{1} }}=\left\{ \begin{array}{*{35}{l}}
	|1{{\rangle_{1} }} & \theta =(4k+1)\pi ,k\in \mathbb{Z}  \\
	{{R}_{X}}(\theta )|0{{\rangle_{1} }} & else  \\
\end{array} \right..
\end{equation}
Eq. (\ref{cond1}) shows that the control bit is $|1\rangle_1$ only when the continuous variable $\theta$ is located at $(4k+1)\pi ,k\in \mathbb{Z}$.
\end{definition}
\begin{proof}
In quantum computing, the interconversion between ground state $|0\rangle_1$ and excited state $|1\rangle_1$ can be realized by a Pauli X gate $X$.
Therefore, ${{R}_{X}(\theta)}$ is equal to the form of $X$ only when $\tfrac{\theta }{2}\text{=}\tfrac{\pi }{2}+2k\pi ,k\in \mathbb{Z}$. That is, \[\left[ \begin{matrix}
	\cos \left( \frac{\pi }{2}+2k\pi  \right) & -i\sin \left( \frac{\pi }{2}+2k\pi  \right)  \\
	-i\sin \left( \frac{\pi }{2}+2k\pi  \right) & \cos \left( \frac{\pi }{2}+2k\pi  \right)  \\
\end{matrix} \right]=-i\left[ \begin{matrix}
	0 & 1  \\
	1 & 0  \\
\end{matrix} \right],\]
where the global phase $-i$ can be neglected.
\end{proof}
Naturally, the definition of hard attention scores can be derived from Definition \ref{Flexible Control}:
\begin{definition}[Quantum Hard Attention Score]
	\begin{equation}\label{cond2}
\text{QHAS}:=\left\{ \begin{array}{*{35}{l}}
	1 & \theta =(4k+1)\pi ,k\in\mathbb{Z}   \\
	0 & else  \\
\end{array} \right.
	\end{equation}
can be used to visualize the choice of FC in training.
\end{definition}

Based on Eq. (\ref{discrete primitives}) and Eq. (\ref{cond1}), FO can be created.
\begin{definition}[Flexible Oracle]
The FO
 \begin{equation}\label{flexible oracle}
U_{S}^{'}=\prod\limits_{b=0}^{{{2}^{n}}-1}{{{R}_{X}}({{\theta }_{b}})\otimes I\times C-\Lambda (b)},
\end{equation} 
as shown in Fig. \ref{US_1}, replaces Eq. (\ref{USmatric}) by applying flexible control to each DP in Eq. (\ref{discrete primitives}), thus coming to select and combine different DPs and create new discrete operations, ultimately freeing GQHAM from the confinement of discrete non-differentiability.
\end{definition}

 \subsection{Adaptive Diffusion Operator}
 
Eq. (\ref{flexible oracle}) focuses attention on important $\mathcal{M}$, but the amplitudes of these $\mathcal{M}$ may be so small that they cannot be thoroughly sorted out, which requires amplifying their quantum amplitudes while shrinking the amplitudes of the irrelevant terms by the diffusion operator. The diffusion operator in the original Grover algorithm \cite{0.14} is defined as  
\begin{equation}\label{G}
{{U}_{D}}={{H}^{\otimes n}}{{X}^{\otimes n}}(MCZ){{X}^{\otimes n}}{{H}^{\otimes n}}.
\end{equation} 
Another geometric interpretation is that Eq. (\ref{G}) is equivalent to 
\begin{equation}\label{G1}
{{U}_{D}}=2|\mathbf{In}\rangle \langle \mathbf{In}|-I,
\end{equation} 
where $|\mathbf{In}\rangle$ is the conjugate transpose of $\langle \mathbf{In}|$. $I$ is the identity matrix. This implies that the symmetry axis of Grover's algorithm is $|\mathbf{In}\rangle$ for each iteration, which loses flexibility. Thus, Eq. (\ref{G}) is improved and the ADO is proposed.

 \begin{definition}[Adaptive Diffusion Operator] The ADO
\begin{equation}\label{G2}
{{U}_{D}}={{U}_{1}}(MCZ)U_{1}^{\dagger },
\end{equation} 
is shown in Fig. \ref{UD1}, where
\begin{equation}\label{G3}
{{U}_{1}}=\prod\limits_{b=0}^{n-1}{C{{R}_{Y}}({{\theta }_{b}})[b,f]}\underset{b=0}{\mathop{\overset{n-1}{\mathop{\otimes }}\,}}\,H[b].
\end{equation} 
\begin{equation}\label{G4}
U_{1}^{\dagger }=\underset{b=0}{\mathop{\overset{n-1}{\mathop{\otimes }}\,}}\,H[b]\prod\limits_{b=0}^{n-1}{C{{R}_{Y}}({{\theta }_{b+n}})[b,f]}.
\end{equation} 
\begin{equation}\label{G5}
f=\left\{ \begin{array}{*{35}{l}}
	c+1 & c\ne n-1  \\
	0 & c=n -1  \\
\end{array} \right..
\end{equation} 
\end{definition}
Eq. (\ref{G3}) and Eq. (\ref{G4}) use a quantum coordinate representation \cite{1.1}.
Eq. (\ref{G2}) replaces $X$ in Eq. (\ref{G}) with $CR_Y(\theta)$ to make it trainable, which indicates that this axis of symmetry is no longer fixed and has flexibility.

 At present, the GQHAM mechanism stands elucidated comprehensively. The ingenuity of GQHAM resides in the activation or masking of DPs through continuous parameters, thereby amalgamating diverse discrete choices without transgressing the tenets of gradient backpropagation. This innovation ultimately surmounts the challenge of non-differentiability. Furthermore, enhancements have been instituted in the original expansion operator to augment the flexibility of GQHAM.

\section{Grover-inspired Quantum Hard Attention Network}\label{workflow}
In this section, GQHAN in Fig. \ref{GQHAN} is constructed based on GQHAM and its workflow is elaborated.
 \begin{figure}[h]
		\centering\includegraphics[scale=0.2]{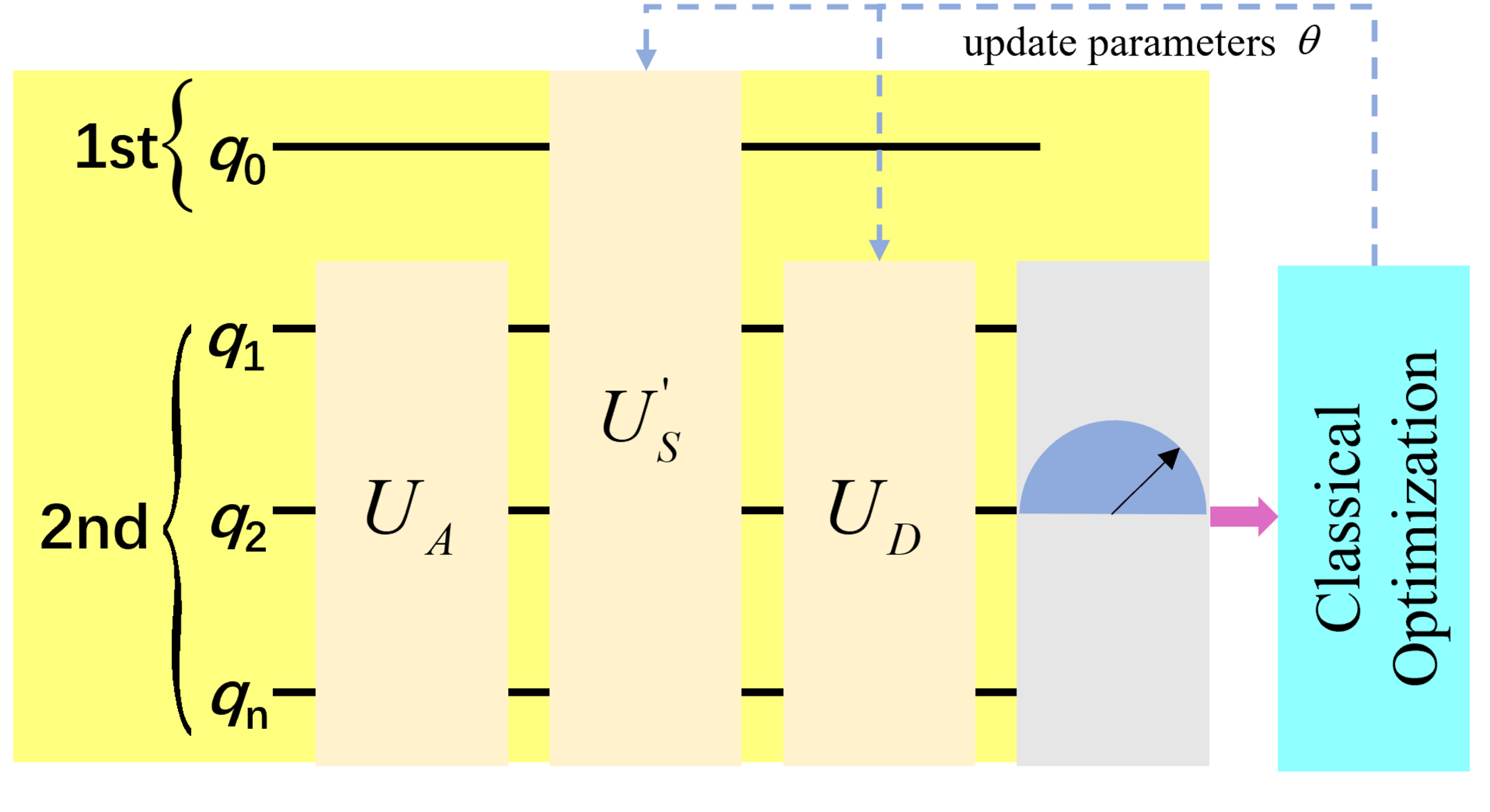}\caption{Grover-inspired Quantum Hard Attention Network}\label{GQHAN}
\end{figure}

As depicted in Fig. \ref{GQHAN}, GQHAN is segregated into dual components: the ansatz segment in the yellow box and the classical optimizer module in the blue box. Among them, the ansatz segment is constituted by $n+1$ qubits, 
encompassing an ancillary qubit $q_0$ in the first quantum register and $n$ input qubits ${{q}_{1}}\sim {{q}_{n}}$  in the second quantum register. In addition, $U_A$ in Fig. \ref{AMPLITUDU}, $U_{S}^{'}$ in Fig. \ref{US_1}  and ${U}_{D}$ in Fig. \ref{UD1} sequentially are quantum amplitude encoding, FO and ADO. Ultimately, the data is extracted via quantum measurement, depicted within the gray box in Fig. \ref{GQHAN}, and subsequently transmitted to the classical optimizer for further processing. Employing classical optimizers, such as the quantum natural simultaneous perturbation stochastic approximation optimizer \cite{3.0} or the quantum natural gradient optimizer \cite{3.1}, the parameters undergo iterative refinement until convergence of the loss function is achieved. It is noteworthy that in this paper, the Nesterov momentum optimizer \cite{3.2}, a variant amalgamating momentum components with gradient descent, is utilized to incorporate historical gradients into the optimization process.

In accordance with the framework depicted in Fig. \ref{GQHAN}, the precise workflow of GQHAN unfolds as follows.

Step 1: The initial quantum state
\begin{equation}\label{Step1}
	|\Psi {{\rangle_{1} }}\otimes |\Phi {{\rangle _{2}}}=|0\rangle_{1} \otimes |0^{\otimes n}\rangle_{2}
\end{equation}
is prepared in the first and second quantum registers.

Step 2: Eq. (\ref{input0}) are embedded into GQHAN by quantum amplitude encoding $U_A$:
\begin{equation}\label{Step2}
|\Psi {{\rangle_{1} }}\otimes |\Phi {{\rangle_{2} }}\xrightarrow{{{U}_{A}}}|0{{\rangle_{1} }}\otimes |\mathbf{In}{{\rangle_{2} }}.
\end{equation}

Step 3: Eq. (\ref{flexible oracle}) is used to adjust the continuous parameter during training to partition Eq.  (\ref{input}) into its salient and negligible components, $|\text{focus}\rangle_{2} $ and $|\overline{\text{focus}}\rangle_{2} $:
\begin{equation}\label{Step3}
|\Psi {{\rangle_{1} }}\otimes |\Phi {{\rangle _{2}}}\xrightarrow{U_{S}^{'}}|\phi{{\rangle_{1} }}\otimes {{(|\text{focus}\rangle_{2} }}+|\overline{\text{focus}}{{\rangle_{2} }}),
\end{equation}
where  $|\phi{{\rangle_{1} }}$ is determined by Eq. (\ref{cond1}).

Step 4: Apply Eq. (\ref{G2}) to dynamically amplify the amplitudes in Eq. (\ref{target}) while reducing the other amplitudes to end up as
\begin{equation}\label{Step4}
|\Psi {{\rangle_{1} }}\otimes |\Phi {{\rangle_{2} }}\xrightarrow{{{U}_{D}}}|\phi{{\rangle _{1}}}\otimes |\widetilde{\text{focus}}{{\rangle_{2} }}.
\end{equation}

Step 5: For specific Fashion MNIST binary classification, the expectation value 
\begin{equation}
	\mathbb{E}=\langle T|P|T \rangle
\end{equation} 
on the last qubit $q_n$ is measured as the predicted label, where $|T\rangle=|\Psi {{\rangle_{1} }}\otimes |\Phi {{\rangle_{2} }}$. $P$ is the projection operator.

Step 6: Cost function is
\begin{equation}
	f(\textbf{In},{{\boldsymbol{\theta }}})=\frac{1}{m}\sum\limits_{i=1}^{m}{{{[{{y}_{i}}-\text{sgn} (\mathbb{E})]}^{2}}},
\end{equation} 
where $\text{sgn}(\cdot)$ is the sign function. $m$ is the number of terms. ${y}_{i}$ stands for real label. ${{\boldsymbol{\theta }}}$ are the trainable parameters. The optimization rules for the Nesterov momentum optimizer are as follows:
\begin{equation}
	\left\{ \begin{aligned}
		& {{\boldsymbol{\theta }}^{(t+1)}}={{\boldsymbol{\theta }}^{(t)}}-{{a}^{(t+1)}} \\ 
		& {{a}^{(t+1)}}=\gamma \cdot {{a}^{(t)}}+\eta \cdot \nabla f(\mathcal{D},{{\mathcal{D}}^{\prime }},{\boldsymbol{\theta }^{(t)}}-\gamma \cdot {{a}^{(t)}}) \\ 
	\end{aligned} \right.\,
\end{equation} 
where the superscript $t$ represents the $t$-th iteration and $t+1$ stands for the next iteration. The momentum term $\gamma$ is adjustable and generally takes the value 0.9. $\eta$ is the learning rate. $a^{(t+1)}$ and $a^{(t)}$ are accumulator terms. $\nabla f(\mathcal{D},{{\mathcal{D}}^{\prime }},{\boldsymbol{\theta }^{(t)}}-\gamma \cdot {{a}^{(t)}})$ denotes the gradient.

In the end, all the above steps are repeated until the cost function converge. 
To sum up, by analyzing Step 1 to 6 above, the operation mechanism of GQHAN in the classification problem is revealed, which lays a theoretical foundation for the experiment.

\section{Experiment}\label{sec4}
In this section, GQHAN is implemented on the PennyLane platform to conduct binary classification experiments on Fashion MNIST.
Precisely, the experiments are categorized into the subsequent segments.
\begin{itemize} 
	\item The performance of GQHAN,  QSAN \cite{1.1} and QKSAN \cite{1.3} is evaluated and compared in the Fashion MNIST binary classification task under a uniform classical optimizer configuration and no noise. 
	\item A visualization of QHAS is performed to represent the results before and after training.
	\item To elucidate the performance on a real quantum computer, the impact of bit-flip error and amplitude damping error on GQHAN is scrutinized.
\end{itemize} 


\subsection{Dataset}

Fashion MNIST \cite{4.3}, the recognized and widely adopted benchmark datasets in machine learning, consists of 10,000 test images and 60,000 training images, respectively, each containing 28 by 28 pixel points. In this experiment, 550 images labeled 0 and another 550 tagged 1 are taken as the dataset from Fashion MNIST. 500 images from each class of tags are randomly sampled and assembled into a training set with the rest as a test set. Moreover, acknowledging the existing constraint on the number of public qubits, exemplified by the limited provision of 5 to 7 free-to-use qubits by IBM, one strategy is to compress the dimensionality of all images in the dataset to 8 using the principal component analysis algorithm.

\subsection{Experimental Setting}
 \begin{figure*}[ht]
	\centering
	\begin{minipage}[h]{1\textwidth}
		\centering\includegraphics[scale=0.24]{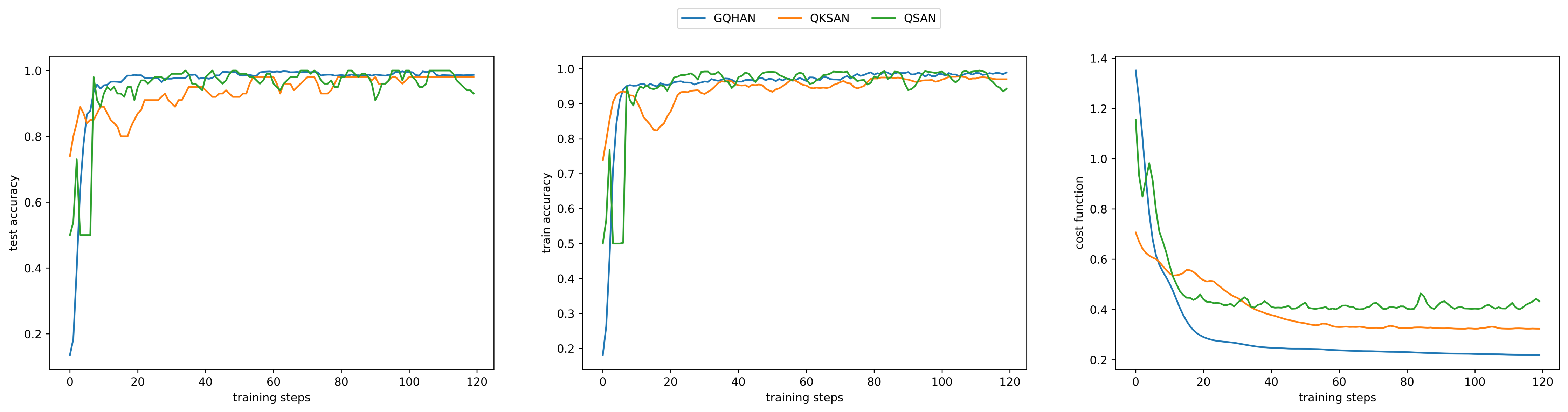}\caption{Comparison of Fashion MNIST Binary Classification for GQHAN, QKSAN and QSAN.}\label{threeModels}
	\end{minipage}
	\quad
			\begin{minipage}[t]{1\textwidth}
		\centering\includegraphics[scale=0.7]{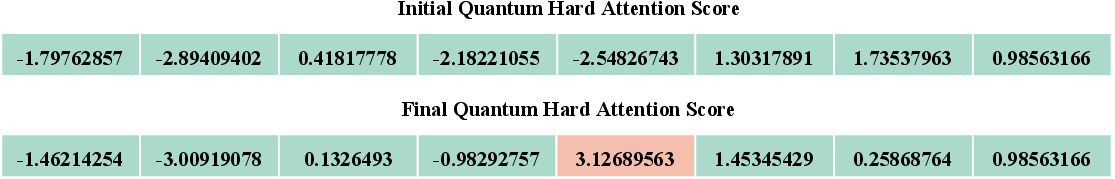}\caption{Visualization of Quantum Hard Attention Score}\label{QHAS}
	\end{minipage}
	\quad
	\begin{minipage}[t]{1\textwidth}
		\centering\includegraphics[scale=0.24]{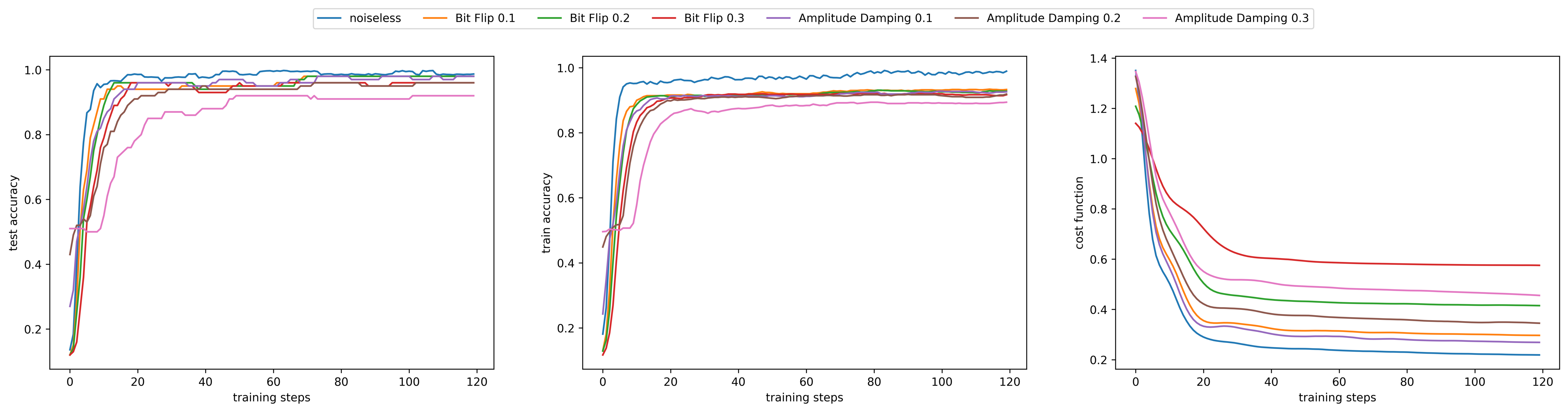}\caption{Noise Experiments of GQHAN}\label{noise}
	\end{minipage}

\end{figure*}
Tab. \ref{Experimental comparison} delineates the specific experimental configurations of GQHAN,  QSAN \cite{1.1} and QKSAN \cite{1.3}, detailing key parameters of both the ansatz and classical optimizers. 
In the ansatz, paramount metrics encompass parameter counts in trainable layers, layer quantity, and requisite qubits. Classical optimization scrutinizes factors such as learning rate, loss function type, batch size, maximum step size, optimizer type, and relevant parameters. The type and configuration of the classical optimizer, along with the number of ansatz parameters, are maintained as consistently as possible to ensure and underscore equitable comparisons between quantum models. Additionally, a substantial learning rate is deliberately chosen to expedite convergence.
\begin{table}[h]
	\small 
	\centering
	\def\tablename{Tab.}
	\caption{Experimental Configuration}
	\label{Experimental comparison}
	\begin{tabular}{lccc}
		
		\toprule
		\multirow{2}{*}{Indicators} & \multicolumn{3}{c}{Models} \\
		\cmidrule(lr){2-4}
		&  GQHAN & QKSAN \cite{1.3}  & QSAN \cite{1.1}\\
		\midrule
		parameters                      & 14               & 14   &9      \\ 
		
		layers                     & 7               & 6      & 26   \rule{0pt}{12pt} \\  
		
		qubits & \multicolumn{2}{c}{4}                    & 8                         \rule{0pt}{12pt}    \\
		
		learning rate & \multicolumn{3}{c}{0.09}                                               \rule{0pt}{12pt}         \\
		
		loss function & \multicolumn{3}{c}{square loss}                                            \rule{0pt}{12pt}            \\
		
		batch\_size   & \multicolumn{3}{c}{30}                                                 \rule{0pt}{12pt}          \\
		
		step          & \multicolumn{3}{c}{120 steps}                                               \rule{0pt}{12pt}          \\
		
		optimizer     & \multicolumn{3}{c}{Nesterov Momentum Optimizer: $\gamma = 0.9$}              \rule{0pt}{12pt}                       \\
		\bottomrule
	\end{tabular}
\end{table}  
\begin{table}[h]
	\centering
	\def\tablename{Tab.}
	\caption{Critical Data}
	\label{data}
	\resizebox{\columnwidth}{!}{%
		\begin{tabular}{lccc@{}}
			\toprule
			\multicolumn{1}{l}{\multirow{2}{*}{Indicators}} & \multicolumn{3}{c}{Models} \\ \cmidrule(l){2-4} 
			&  GQHAN & QKSAN \cite{1.3}  & QSAN \cite{1.1}  \\ \midrule
			test accuracy of the last 10 steps                      & 98.59\%              & 98\%   &96.8\%       \\ 
			train accuracy of the last 10 steps                      & 98.65\%              & 97.22\%   &96.77\%     \rule{0pt}{12pt}   \\ 
			number of step to start convergence                     & 19              & 25      & 55   \rule{0pt}{12pt}  \\  
			convergence value of the loss function                     & 0.219             & 0.323      & 0.42   \rule{0pt}{12pt}   \\  
			\bottomrule
		\end{tabular}%
	}
\end{table}

\subsection{Experimental Analysis}

\subsubsection{Classification Experiment}

The outcomes of the comparative analysis involving GQHAN and two quantum soft self-attention mechanisms, namely QKSAN and QSAN, are depicted in Fig. \ref{threeModels}.
In Fig. \ref{threeModels}, the abscissae of all three subplots represent the training steps, while the ordinates denote the test accuracy, train accuracy, and loss function sequentially. 
The blue, orange and green curves delineate GQHAN, QKSAN, and QSAN correspondingly.
According to Fig. \ref{threeModels}, the critical data are enumerated in Tab. \ref{data}.
Finally, the following conclusions can be drawn.

\begin{itemize}
	\item By averaging the accuracy over the last 10 steps, GQHAN attains a test accuracy of 98.59\%, surpassing QKSAN at 98\% and QSAN at 96.8\%. Regarding training accuracy, it outperforms QKSAN by 1.42\% and QSAN by 1.87\%.
	\item In terms of convergence speed, GQHAN initiates convergence around step 20, slightly lagging behind QSAN by approximately 5 steps but significantly outpacing QKSAN.
	\item Concerning convergence values, GQHAN converges to 0.219, compared to 0.323 for QSAN and 0.42 for QKSAN, indicating a superior learning capability.
\end{itemize}

In summary, the analysis and comparison from three perspectives demonstrates that GQHAN is able to slightly surpass the two quantum soft self-attention mechanisms in terms of performance in the Fashion MNIST classification.
\subsubsection{Visualization of Quantum Hard Attention Score}
The visualization results of QHAS under the experimental conditions of the previous subsection are shown in Fig. \ref{QHAS}. All values in Fig. \ref{QHAS} represent the parameters $\theta$ in $R_X(\theta)$. Here, green color is used to indicate that they are not selected, while red color has the opposite meaning of green color. Since the parameter initialization are random, this batch of parameters does not meet the requirement of Eq. (\ref{cond2}). Instead, in the last round of training, the parameter 3.12689563 is marked red and is very close to $\pi$, at which point it can be considered as being discretely selected in engineering.

\subsubsection{Noise Experiments}

The amplitude damping and bit-flip noise experiments for GQHAN are illustrated in Fig. \ref{noise}. \begin{table}[h]
	\centering
	\def\tablename{Tab.}
	\caption{Critical Data for Noise Experiments}
	\label{data1}
	\resizebox{\columnwidth}{!}{%
		\begin{tabular}{lccc@{}}
			\toprule
			\multicolumn{1}{l}{\multirow{2}{*}{Models}} & \multicolumn{3}{c}{Indicators} \\ \cmidrule(l){2-4} 
			& test accuracy    & train accuracy  & convergence value  \\ \midrule
			noiseless                     & 98.59\%              & 98.65\%   & 0.219       \\ 
			amplitude damping 0.1                      & 97.6\%              & 92.42\%   &0.269    \rule{0pt}{12pt}   \\ 
			amplitude damping 0.2                     & 96\%              & 91.24\%      & 0.348   \rule{0pt}{12pt}  \\  
			amplitude damping 0.3                    & 92\%            & 89.14\%    & 0.459   \rule{0pt}{12pt}   \\  
			bit flip 0.1                      & 98\%              & 93.27\%   &0.297    \rule{0pt}{12pt}   \\   
			bit flip 0.2                   & 98\%              & 92.68\%      & 0.416   \rule{0pt}{12pt}  \\  
			bit flip 0.3                     & 96\%            & 91.58\%    & 0.576   \rule{0pt}{12pt}   \\  
			\bottomrule
		\end{tabular}%
	}
\end{table}Except for the blue line which indicates the experimental results in the absence of noise, Fig. \ref{noise} depicts three sets of experiments conducted for both amplitude damping and bit-flip noise, each occurring with probabilities of 0.1, 0.2, and 0.3. Likewise, the mean metrics derived from the final 10 steps upon completion of the training regimen were employed as pivotal assessment data, as delineated in Tab. \ref{data1}. 
According to the data presented in Tab. \ref{data1}, discernible conclusions can be derived.
\begin{itemize}
	\item In assessing the diminution of test and training accuracy, the impact of amplitude damping noise exhibits a more conspicuous manifestation.
	\item With equiprobability of noise emergence, the bit-flip perturbation induces a more pronounced escalation in convergence values, thereby exacerbating the learning efficacy degradation of GQHAN.
\end{itemize}

In total, GQHAN exhibits a degree of resilience against both types of noise. Concerning accuracy, the impact of bit-flip noise manifests as a comparatively minor decline in intuitive values. However, it is noteworthy that the degradation in its effect on learning performance is more conspicuous.

\section{Conclusion}\label{sec5}
A quantum HAM, GQHAM, is proposed to compensate for the inability of many current QMLs to recognize the significance of quantum data. In addition, FO and ADO are designed to overcome the nativity of non-differentiability due to discrete selection. A QHAS visualization definition is also providedon this premise. Ultimately, GQHAN, which can be deployed on a quantum computer, is constructed based on GQHAM. In binary classification experiments on Fashion MNIST on the PennyLane platform, GQHAN achieves 98.59\% and 98.65\% test and train accuracies, respectively, and outperforms the two quantum soft self-attention in terms of learning ability. In the noise experiments, contrasted with amplitude damping noise, the impact of bit flip noise on GQHAN is relatively diminished in precision but looms larger in learning capabilities. Our approach contributes to the QML model to pay more attention to the important parts of quantum data, laying the foundation for future quantum computers to process massive amounts of high-dimensional data.
\ifCLASSOPTIONcompsoc


\begin{IEEEbiographynophoto}{Ren-Xin Zhao}(Member, IEEE) received his B.S. degree from the College of Automation, Hangzhou Dianzi University, Hangzhou, China in 2017 and his M.S. degree from the College of Electrical and Information Engineering, Hunan University, Changsha, China in 2020. He is now a PhD student in the  School of Computer Science and Engineering at Central South University, Changsha, China. His interests include quantum machine learning, quantum neural networks, and design and optimization of quantum circuits.\end{IEEEbiographynophoto}

\begin{IEEEbiography} [{\includegraphics[width=1in,height=1.25in,clip,keepaspectratio]{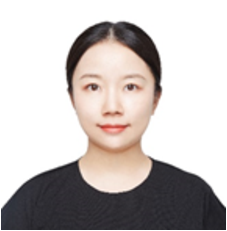}}]{Jinjing Shi}(Member, IEEE) is now a professor in the School of Electronic Information of Central South University. She received her B.S. and Ph.D. degrees in the School of Information Science and Engineering, Central South University, Changsha, China, in 2008 and 2013, respectively. She was selected in the ”Shenghua lieying” talent program of Central South University and Special Foundation for Distinguished Young Scientists of Changsha in 2013 and 2019, respectively. Her research interests include quantum computation and quantum cryptography. She has presided over the National Natural Science Foundation Project of China and that of Hunan Province. There are 50 academic papers published in important international academic journals and conferences. She has received the second prize of natural science and the outstanding doctoral dissertation of Hunan Province in 2015, and she has received the Best Paper Award in the international academic conference MSPT2011 and Outstanding Paper Award in IEEE ICACT2012.\end{IEEEbiography}

\begin{IEEEbiographynophoto}{Xuelong Li} (M'02-SM'07-F'12) is with the Institute of Artificial Intelligence (Tele AI), China Telecom Corp Ltd, 31 Jinrong Street, Beijing 100033, P. R. China
\end{IEEEbiographynophoto}


\ifCLASSOPTIONcaptionsoff
  \newpage
\fi

\end{document}